\newtheorem{theo}{Theorem}
\newcommand{\pmatrx}[1]{\begin{pmatrix} #1 \end{pmatrix}}
\title{Reciprocal transformations and deformations of integrable hierarchies}
\author{A. Sergyeyev}
\thanks{This research was supported in part
by the Ministry of Education, Youth and Sports of Czech Republic
under grant MSM 4781305904.}
\address{
Silesian University in Opava, Mathematical Institute, Na
Rybn\'\i{}\v{c}ku~1, 746\,01 Opava, Czech Republic}
\email{Artur.Sergyeyev@math.slu.cz}
\keywords{Nonlinear evolution equations,
integrability, Lax representation, $R$-matrix, reciprocal transformations}
\subjclass[2000]{37K10, 37K15}
\begin{document}
\maketitle
\begin{abstract}
\vspace{-10mm}

We present changes of variables that transform new integrable
hierarchies 
found by Szablikowski and B\l aszak using the $R$-matrix deformation
technique [J. Math. Phys. 47 (2006), paper 043505] into known
Harry-Dym-type and mKdV-type hierarchies. 
\end{abstract}

\section*{Introduction}

Recently, Szablikowski and B\l aszak \cite{sb} came up with
a new class of integrable hierarchies which is constructed as follows.
Consider the Lax operator
\begin{equation}\label{lk1}
L = \sum\limits_{i=0}^N u_i D_x^i + D_x^{-1} \circ
u_{-1}
\end{equation}
and the related (formal) spectral problem
\begin{equation}\label{lpsi}
L\psi=\lambda\psi.
\end{equation}
Here $D_{x}=d/d x$ is the total $x$-derivative.

Following \cite{sb} 
consider the following hierarchy of equations compatible with (\ref{lpsi}):
\begin{equation}\label{hier}
\psi_{t_q}=(P_{\geq 1}(L^q)+ \epsilon  \lbrack L^q\rbrack_{0} D_x)(\psi),\qquad q=1,2,\dots
\end{equation}
where for any formal series $K=\sum\limits_{j=-\infty}^k a_i D_x^i$
we have $P_{\geq s}(K)\stackrel{\mathrm{def}}{=}\sum\limits_{j=s}^k
a_i D_x^i$ and $\lbrack K\rbrack_{i}=a_i$, and $\epsilon$ is an
arbitrary constant.

The compatibility conditions for (\ref{lpsi}) and (\ref{hier}) read
\begin{equation}\label{hiera}
L_{t_q}=[P_{\geq 1}(L^q)+ \epsilon  \lbrack L^q\rbrack_{0} D_x,L],\qquad q=1,2,\dots.
\end{equation}
Note that the hierarchy (\ref{hiera}) with $\epsilon=0$
was discovered by Kupershmidt \cite{kup85}.

The goal of the present work is to show that, under a suitable
change of dependent and independent variables $x,t_{i}$, and
$u_{k}$, the ``deformed" hierarchy (\ref{hiera}) can be transformed
into the linear extension of the undeformed ($\epsilon=0$) hierarchy
(\ref{hiera}) in the senses of \cite{kup01}, i.e., into the said
undeformed hierarchy plus a system of linear PDEs on the background
of the latter. We also present a link among the deformed hierarchy
(\ref{hiera}) and the Harry-Dym-type hierarchy (\ref{cc1}). Finally,
we present the dispersionless limit of these results.

\section{Main results}

Let $L$ be as above and 
suppose that we have
\[
[D_{t_i}- P_{\geq 1}(L^i)- \epsilon  \lbrack L^i\rbrack_{0}  D_x,
D_{t_j}- P_{\geq 1}(L^j)- \epsilon  \lbrack L^j\rbrack_{0} D_x]=0,
\quad i,j=0,1,2,\dots.
\]

Let $\chi$ be an arbitrary (smooth) solution of the system
\[
\chi_{t_i}=(P_{\geq 1}(L^i)+ \epsilon  \lbrack L^i\rbrack_{0}
D_x)\chi,\quad i=0,1,2,\dots
\]
such that $\chi_x\neq 0$.

Introduce new set of independent variables $(z,\{\tau_i\})$, where
$z=\chi$ and $\tau_i=t_i$, instead of $(x,\{t_i\})$.

Define new dependent variables $v_i$ by means of the formula
\begin{equation}\label{rt2}
\tilde L=L|_{D_x=\chi_x D_z}= \sum\limits_{i=0}^N u_i (\chi_x D_z)^i
+ D_z^{-1} \circ \chi_x^{-1} u_{-1} \equiv v_N D_z^N+D_z^{-1} \circ
v_{-1}+\sum\limits_{i=0}^{N-1}v_i D_z^i
\end{equation}


We have the following generalization of Lemma 4, i) of
\cite{or}:
\begin{theo}\label{t1}
The transformation $(x,\{t_i\},\{u_j\}) \rightarrow (z,\{\tau_i\},
\{v_j\})$ sends the system
\begin{equation}\label{hier0} L\psi=\lambda\psi,\qquad
\psi_{t_i}=(P_{\geq 1}(L^i)+ \epsilon  \lbrack L^i\rbrack_{0}
D_x)\psi,\quad i=0,1,2,\dots
\end{equation}
into
\begin{equation}\label{hier1}
\tilde L\psi=\lambda\psi,\qquad \psi_{\tau_i}=(P'_{\geq 2}(\tilde
L^i))\psi,\quad i=0,1,2,\dots,
\end{equation}
and hence sends the hierarchy
\begin{equation}\label{cc0}
L_{t_i}=[P_{\geq 1}(L^i)+ \epsilon  \lbrack L^i\rbrack_{0}
D_x,L],\quad i=0,1,2,\dots
\end{equation}
into the Harry-Dym-type hierarchy
\begin{equation}\label{cc1}
\tilde L_{\tau_i}=[P'_{\geq 2}(\tilde L^i),\tilde L],\quad
i=0,1,2,\dots.
\end{equation}
\end{theo}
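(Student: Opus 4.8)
Write $A_i=P_{\geq 1}(L^i)+\epsilon\lbrack L^i\rbrack_0 D_x$ for the operator generating the $t_i$-flow. Since $z=\chi$ and $\tau_i=t_i$ with $\chi_x\neq 0$, the chain rule gives the operator identities $D_x=\chi_x D_z$ and $D_{t_i}=\chi_{t_i}D_z+D_{\tau_i}$. The assignment $f\mapsto f$ (the same function re-expressed in the variables $z,\{\tau_j\}$) together with $D_x\mapsto\chi_x D_z$ is an isomorphism of the rings of formal pseudodifferential operators, because $[\chi_x D_z,f]=\chi_x f_z=f_x$ in the new variables; by (\ref{rt2}) this isomorphism sends $L$ to $\tilde L$, hence $L^i$ to $\tilde L^i$, and it sends $A_i$ to an operator $\tilde A_i$ which is a differential operator in $D_z$. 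In particular the spectral problem $L\psi=\lambda\psi$ becomes $\tilde L\psi=\lambda\psi$ at once, and all that remains is the transformation of the flows.

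The first real step is the algebraic identity
\[
\tilde A_i=P'_{\geq 2}(\tilde L^i)+(A_i\chi)\,D_z,\qquad\text{i.e.}\qquad \lbrack\tilde A_i\rbrack_0=0,\quad \lbrack\tilde A_i\rbrack_1=A_i\chi,\quad P'_{\geq 2}(\tilde A_i)=P'_{\geq 2}(\tilde L^i).
\]
To obtain it, rewrite $A_i=L^i-\lbrack L^i\rbrack_0-P_{\leq -1}(L^i)+\epsilon\lbrack L^i\rbrack_0 D_x$ and apply the above isomorphism: $L^i\mapsto\tilde L^i$, the multiplication operator $\lbrack L^i\rbrack_0$ is unchanged, $P_{\leq -1}(L^i)$ maps to an operator involving only negative powers of $D_z$ (since $D_x^{-1}=D_z^{-1}\circ\chi_x^{-1}$), and $\epsilon\lbrack L^i\rbrack_0 D_x\mapsto\epsilon\lbrack L^i\rbrack_0\chi_x D_z$ has order at most $1$. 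Comparing coefficients of $D_z^k$ for $k\geq 2$ therefore yields $P'_{\geq 2}(\tilde A_i)=P'_{\geq 2}(\tilde L^i)$, while the order-$0$ part of $\tilde A_i$ equals $\lbrack\tilde L^i\rbrack_0-\lbrack L^i\rbrack_0=0$. Finally, since in the new variables $\chi$ is the coordinate function $z$, applying $\tilde A_i$ to it gives $\tilde A_i(z)=\lbrack\tilde A_i\rbrack_1$, and this value is $A_i\chi$.

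Now transform the flow $\psi_{t_i}=A_i\psi$. Using $D_{t_i}=\chi_{t_i}D_z+D_{\tau_i}$ and the hypothesis $\chi_{t_i}=A_i\chi$, together with the identity just proved,
\[
\psi_{\tau_i}=A_i\psi-\chi_{t_i}\psi_z=\tilde A_i\psi-(A_i\chi)\psi_z=\bigl(\tilde A_i-(A_i\chi)D_z\bigr)\psi=P'_{\geq 2}(\tilde L^i)\,\psi .
\]
Together with $\tilde L\psi=\lambda\psi$ this is precisely (\ref{hier1}), so the transformation sends (\ref{hier0}) into (\ref{hier1}).

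For the hierarchies, pass to zero-curvature form: (\ref{cc0}) is equivalent to $[D_{t_i}-A_i,L]=0$, since $[D_{t_i},L]=L_{t_i}$. Applying the change of variables and using $\tilde A_i=P'_{\geq 2}(\tilde L^i)+(A_i\chi)D_z=P'_{\geq 2}(\tilde L^i)+\chi_{t_i}D_z$ together with $D_{t_i}=\chi_{t_i}D_z+D_{\tau_i}$, the operator $D_{t_i}-A_i$ becomes $D_{\tau_i}-P'_{\geq 2}(\tilde L^i)$: the two occurrences of $\chi_{t_i}D_z$ cancel exactly. Hence $[D_{\tau_i}-P'_{\geq 2}(\tilde L^i),\tilde L]=0$, and since $[D_{\tau_i},\tilde L]=\tilde L_{\tau_i}$ this is (\ref{cc1}). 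The only step that is not pure bookkeeping with the chain rule and zero-curvature representations is the degree-sorted identity for $\tilde A_i$ in the second paragraph, i.e.\ verifying that, after the reciprocal substitution, the orders $0$, $1$ and $\geq 2$ of $A_i$ separate as claimed; I expect that to be the main point requiring care.
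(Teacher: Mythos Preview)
Your proof is correct and follows essentially the same route as the paper: the paper performs the change of variables and invokes the identity $P'_{\geq 2}(\tilde A)=P_{\geq 1}(A)-(P_{\geq 1}(A)\chi)D_z$ (Lemma~3,~i) of Oevel--Rogers), which is exactly the degree-sorting identity you establish directly in your second paragraph. The only difference is that you supply a self-contained derivation of that identity (and handle the extra $\epsilon[L^i]_0 D_x$ term explicitly) rather than citing the literature, and you pass to the hierarchy via the zero-curvature operator $D_{t_i}-A_i$ rather than by appealing to compatibility conditions; both are equivalent to what the paper does.
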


Here and below 
for any formal series $M=\sum\limits_{j=-\infty}^k b_i D_z^i$ we set
$P'_{\geq s}(M)=\sum\limits_{j=s}^k b_i D_z^i$.

In the undeformed case ($\epsilon=0$) we just recover Lemma 4, i) of
\cite{or}. The proof of Theorem~\ref{t1} amounts to performing the
change of variables $(x,\{t_i\},\{u_j\}) \rightarrow (z,\{\tau_i\},
\{v_j\})$ in (\ref{hier0}) and using the identity (see Lemma 3, i)
of \cite{or})
\[
P'_{\geq 2}(\tilde A)=P_{\geq 1}(A)-(P_{\geq 1}(A)\chi)D_{z}
\]
where $z=\chi$ and $\tilde A=A|_{D_x=\chi_x D_z}$,
valid for any pseudodifferential operator $A$. With this in mind,
we readily see that the transformation in question indeed sends
(\ref{hier0}) into (\ref{hier1}). Finally, as (\ref{cc0}) (resp.\ (\ref{cc1}))
are compatibility conditions for (\ref{hier0}) (resp.\ (\ref{hier1})),
(\ref{hier0}) goes into (\ref{hier1}), as desired.

Note that for $N=1$ a transformation relating the {\em hierarchies}
(\ref{cc0}) and (\ref{cc1}) was found in \cite{gz}.


\begin{theo}\label{rtt}
Consider the {\em reciprocal transformation} from $x$ and $t_i$,
$i=1,2,\dots$ to new independent variables $z$ and $\tau_i$,
$i=1,2,\dots$, where $\tau_i=t_i$, $i=1,2,\dots$, and $z$ is defined
by the formula
\begin{equation}\label{rt1}
dz=(u_N)^{-1/N}dx +\epsilon \sum\limits_{q=1}^{\infty} (u_N)^{-1/N}  \lbrack L^q\rbrack_{0} dt_q,
\end{equation}
and introduce new dependent variables $v_i$ related to $u_i$  by means of the formulas
\begin{equation}\label{rt2}
\tilde L=L|_{D_x=(u_N)^{-1/N}D_z}= \sum\limits_{i=0}^N u_i ((u_N)^{-1/N} D_z)^i + D_z^{-1} \circ
(u_N)^{-1/N} u_{-1} \equiv D_z^N+D_z\circ v_{-1}+\sum\limits_{i=0}^{N-1}v_i D_z^i
\end{equation}
and
\begin{equation}\label{rt3}
v_N=(u_N)^{1/N}.\end{equation}

Then the hierarchy 
(\ref{hiera}) goes into the mKdV-type hierarchy of nonlinear PDEs
for $v_i$,
\begin{equation}\label{hiera1}
\tilde L_{\tau_q}=[P'_{\geq 1}(\tilde L^q),\tilde L],\quad
q=1,2,\dots,
\end{equation}
along with a separate hierarchy for $v_N$:
\begin{equation}\label{hiera1a}
(v_N)_{\tau_q}=-\epsilon (\lbrack \tilde L^q\rbrack_{0})_z,\quad q=1,2,\dots
\end{equation}
\end{theo}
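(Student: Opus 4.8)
The plan is to recast the hierarchy (\ref{hiera}) as an operator identity acting on functions, transport it through the change of independent variables, and handle $v_N$ by a short separate computation. Throughout write $\phi=(u_N)^{-1/N}$ and $h_q=[L^q]_0$, so that (\ref{rt1}) reads $dz=\phi\,dx+\epsilon\sum_{q}\phi h_q\,dt_q$. First I would verify that this one-form is closed, so that $z$ is well defined: taking the coefficient of $D_x^N$ in the $q$-th equation of (\ref{hiera}), and using that $[L^q,L]=0$ forces $[P_{\geq 1}(L^q),L]$ to have order $\leq N-1$, one obtains $(u_N)_{t_q}=\epsilon\big(h_q(u_N)_x-Nu_N(h_q)_x\big)$, which is exactly equivalent to $\phi_{t_q}=\epsilon(\phi h_q)_x$; the remaining conditions $\partial_{t_p}(\phi h_q)=\partial_{t_q}(\phi h_p)$ are the compatibility of the flows (\ref{hiera}). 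Hence $z_x=\phi$ and $z_{t_q}=\epsilon\phi h_q$, so the change of variables sends the derivations to $D_x=\phi D_z$ and $D_{t_q}=D_{\tau_q}+\epsilon\phi h_q D_z$, and (since $u_N\neq 0$) the substitution $D_x\mapsto\phi D_z$ extends to an isomorphism $A\mapsto\tilde A=A|_{D_x=\phi D_z}$ of the relevant algebras of (pseudo)differential operators, under which $L\mapsto\tilde L$ with leading coefficient $u_N\phi^N=1$.

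The next step is to transport (\ref{hiera}). A short induction shows that $(\phi D_z)^j$ carries no $D_z^0$-term for every integer $j\neq 0$, so that $\widetilde{P_{\geq 1}(A)}$ involves only powers $D_z^{\,k}$ with $k\geq 1$ and $\widetilde{P_{\leq -1}(A)}$ only $k\leq -1$; applying the map $A\mapsto\tilde A$ to $L^q=P_{\geq 1}(L^q)+h_q+P_{\leq -1}(L^q)$ and comparing the three pieces then yields $P'_{\geq 1}(\tilde L^q)=\widetilde{P_{\geq 1}(L^q)}$ and $[\tilde L^q]_0=h_q$. Consequently $B_q:=P_{\geq 1}(L^q)+\epsilon h_q D_x$ turns into $\tilde B_q=P'_{\geq 1}(\tilde L^q)+\epsilon h_q\phi D_z$. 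Writing (\ref{hiera}) as the operator identity $[D_{t_q}-B_q,\,L]=0$ and substituting,
\[
0=[\,D_{\tau_q}+\epsilon\phi h_q D_z-\tilde B_q,\ \tilde L\,]=[\,D_{\tau_q}+\epsilon\phi h_q D_z-P'_{\geq 1}(\tilde L^q)-\epsilon h_q\phi D_z,\ \tilde L\,]=[\,D_{\tau_q}-P'_{\geq 1}(\tilde L^q),\ \tilde L\,],
\]
the two first-order terms cancelling; this is precisely (\ref{hiera1}). (It is consistent with $\tilde L$ remaining monic, as the same argument gives $[P'_{\geq 1}(\tilde L^q),\tilde L]$ order $\leq N-1$.)

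It remains to obtain (\ref{hiera1a}). By (\ref{rt3}), $v_N=\phi^{-1}$; using $D_{\tau_q}=D_{t_q}-\epsilon\phi h_q D_z$, the chain rule $D_z=\phi^{-1}D_x$ (hence $D_z(\phi^{-1})=-\phi^{-3}\phi_x$), and the closedness relation $\phi_{t_q}=\epsilon(\phi h_q)_x$ established above,
\[
(v_N)_{\tau_q}=-\phi^{-2}\phi_{t_q}+\epsilon h_q\phi^{-2}\phi_x=-\epsilon\phi^{-2}\big[(\phi h_q)_x-h_q\phi_x\big]=-\epsilon\phi^{-1}(h_q)_x=-\epsilon\,(h_q)_z,
\]
which is (\ref{hiera1a}). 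The step I expect to require the most care is the transport of the Lax equations in the second paragraph: one has to check that $D_x\mapsto\phi D_z$ respects the splitting of a pseudodifferential operator into its $P_{\geq 1}$, $[\,\cdot\,]_0$ and $P_{\leq -1}$ parts, and then to keep careful track of the first-order terms. It is exactly the cancellation between the correction term $\epsilon h_q D_x$ in $B_q$ and the shift of $D_{t_q}$ by $\epsilon\phi h_q D_z$ coming from the reciprocal transformation that removes the first-order contribution and leaves the clean mKdV-type projector $P'_{\geq 1}$ (rather than the $P'_{\geq 2}$ of Theorem~\ref{t1}, where $z$ is instead an eigenfunction).
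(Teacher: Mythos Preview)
Your proof is correct and follows essentially the same route as the paper's: both verify the conservation law $\phi_{t_q}=\epsilon(\phi h_q)_x$ to make $z$ well defined, use the identity $\widetilde{P_{\geq 1}(L^q)}=P'_{\geq 1}(\tilde L^q)$ together with $[L^q]_0=[\tilde L^q]_0$ (the paper cites Oevel--Rogers for this, you give the short induction on $(\phi D_z)^j$), and observe that the deformation term $\epsilon h_q D_x$ cancels exactly against the shift $D_{t_q}=D_{\tau_q}+\epsilon\phi h_q D_z$. The only cosmetic difference is that the paper transports the linear system $L\psi=\lambda\psi$, $\psi_{t_q}=B_q\psi$ and then reads off (\ref{hiera1}) as its compatibility condition, whereas you transport the operator identity $[D_{t_q}-B_q,L]=0$ directly.
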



\begin{proof}
First of all notice that \cite{sb} by virtue of (\ref{hiera}) we have
$(u_{N})_{t_{q}}=\epsilon D_{x}(u_N)
[L^q]_{0}- \epsilon N u_N D_{x}([L^q]_{0})$, which is equivalent to
\begin{equation}\label{cl}
(u_{N}^{-1/N})_{t_{q}}=\epsilon D_{x} (u_N [L^q]_{0}).
\end{equation}
From this equality it is immediate that the variable $z$ above is well defined.

Next, pass in (\ref{lpsi}) and (\ref{hier})
from the independent variables $(x,\{t_{q}\})$ to $(z,\{\tau_{q}\})$.
Then (\ref{lpsi}) takes the form
\begin{equation}\label{lpsirt}
\tilde L\psi=\lambda\psi,
\end{equation}
where $\tilde L$ is defined by the first part of (\ref{rt2}), that
is,
\[
\tilde L=L|_{D_x=(u_N)^{-1/N}D_z}= \sum\limits_{i=0}^N u_i ((u_N)^{-1/N} D_z)^i + D_z^{-1} \circ
(u_N)^{-1/N} u_{-1}
\]
while (\ref{hier}) becomes
\begin{equation}\label{hierrt}
\psi_{\tau_q}=P_{\geq 1}(L^q)(\psi),\qquad q=1,2,\dots
\end{equation}
Moreover, if we now introduce
new dependent variables $v_{i}$ using (\ref{rt2})
then we have (see e.g.\ \cite{or})
$P_{\geq 1}(L^q)=P'_{\geq 1}(\tilde L^q)$, so
the compatibility conditions for (\ref{lpsirt}) and (\ref{hierrt})
are precisely (\ref{hiera1}).

On the other hand, if we pass from $(x,\{t_{q}\})$ to $(z,\{\tau_{q}\})$ in
(\ref{cl}) and set $v_{N}=u_{N}^{1/N}$, we obtain
\begin{equation}\label{hiera1a0}
(v_N)_{\tau_q}=-\epsilon D_{z}(\lbrack L^q\rbrack_{0}),\quad q=1,2,\dots,
\end{equation}
and as $P_{\geq 1}(L^q)=P'_{\geq 1}(\tilde L^q)$ implies \cite{or}
$\lbrack L^q\rbrack_{0}=\lbrack \tilde L^q\rbrack_{0}$,
we see that (\ref{hiera1a0}) becomes (\ref{hiera1}), as desired.

Therefore, the hierarchy (\ref{hiera}) goes into (\ref{hiera1})
{\em plus} the equations (\ref{hiera1a}), q.e.d.
\end{proof}

Thus, the hierarchy (\ref{hiera}) does not produce any substantially new integrable
systems: its contents is formed by the linear extensions (\ref{hiera1a}) of the undeformed
mKdV-type hierarchy (\ref{hiera1}).


To show how the above transformation works, set $N=1$ and consider
the so-called extended Broer--Kaup system
 (see Example 4 in \cite{sb}) with $L = u D_x + v + D_x^{-1}\circ w$. From
 $L_{t_i} =
[P_{\geq 1}(L^i)+\epsilon [L^i]_{0}D_x,L]$ with $i=1,2$ we obtain
\begin{eqnarray}
&\pmatrx{u\\ v\\ w}_{t_1} &= \pmatrx{\epsilon u_xv-\epsilon uv_x\nonumber\\ uv_x+\epsilon vv_x
\\ u_xw+uw_x+\epsilon v_xw+\epsilon v w_x}\\
\label{ebk}\\
&u_{t_2} &= \epsilon u_xv^2-2\epsilon uvv_x-2\epsilon u^2w_x-\epsilon u^2 v_{xx}\nonumber\\
&v_{t_2} &= 2uu_xw+2uvv_x+2u^2w_x+uu_xv_x+u^2v_{xx}+\epsilon v^2v_x+2\epsilon uv_xw+\epsilon uv_x^2\nonumber\\
&w_{t_2} &= 2u_xvw+2uv_xw+2uvw_x-u_x^2w-3uu_xw_x-uu_{xx}w-u^2w_{xx}+2\epsilon u_xw^2 \nonumber\\
&\quad &+2\epsilon vv_xw+\epsilon u_xv_xw +\epsilon v^2w_x+4\epsilon uww_x
+\epsilon uv_xw_x+\epsilon uv_{xx}w.\nonumber
\end{eqnarray}
Upon setting $\epsilon =0$ and $u=1$ we
recover \cite{sb} from the second ($t_{2}$) flow the standard Kaup--Broer system.

However, using the transformation from Theorem \ref{rtt} we can reduce the system (\ref{ebk})
to the standard Broer--Kaup system plus a linear equation for $u$. Indeed, let
us pass from the independent variables $x$ and $t_{1}$ to $z$ and $\tau_{1}$ defined by the formulas
$\tau_{1}=t_{1}$, $\tau_{2}=t_{2}$, and
\[
dz=(1/u)dx +\epsilon (v/u)  dt_1+ (\epsilon/u)(2 w u+ u v_{x}+v^{2}) dt_2
\]
(we ignored here the times $t_{i}$ with $i\neq 1,2$).

We have $D_{x}=(1/u) D_{z}$, so $L$ becomes
\[
L=u D_x + v + D_x^{-1}\circ w= D_{z}+ v+ D_{z}^{-1}\circ r,
\]
where $r=w u$.

Upon passing from
the variables $(x,t_{1},t_{2},u,v,w)$ to $(z,\tau_{1},\tau_{2},u,v,r)$
the system (\ref{ebk}) takes the form
\begin{eqnarray}
&\pmatrx{u\\ v\\ r}_{\tau_1} &=
\pmatrx{-\epsilon v_{z}\nonumber\\ v_z\nonumber\\
r_{z}}\\
\label{ebk1}\\
&u_{\tau_2} &= - \epsilon (2 r+ v_{z}+v^{2})_{z}\nonumber\\
&v_{\tau_2} &= 2uu_xw+2uvv_x+2u^2w_x+uu_xv_x+u^2v_{xx}+\epsilon v^2v_x+2\epsilon uv_xw+\epsilon uv_x^2\nonumber\\
&r_{\tau_2} &= 2u_xvw+2uv_xw+2uvw_x-u_x^2w-3uu_xw_x-uu_{xx}w-u^2w_{xx}+2\epsilon u_xw^2 \nonumber\\
&\quad &+2\epsilon vv_xw+\epsilon u_xv_xw +\epsilon v^2w_x+4\epsilon uww_x
+\epsilon uv_xw_x+\epsilon uv_{xx}w.\nonumber
\end{eqnarray}

It is immediate that the last two equations of (\ref{ebk1}), namely those for $v_{\tau_{2}}$ and $r_{\tau_{2}}$,
form the standard (non-extended) Broer--Kaup system for $v$ and $r$, while the equation determining $u_{\tau_{2}}$
is a {\em linear} equation that can be readily solved for given $v$ and $r$.

As a final remark note that the result of Theorem~\ref{rtt} remains valid for the operators
$L$ of more general form than (\ref{lk1}). Namely, Theorem~\ref{rtt} still holds if we replace
$L$ (\ref{lk1}) by
\begin{equation}\label{lk1a}
L = \sum\limits_{i=-\infty}^N u_i D_x^i
\end{equation}
and define new dependent variables $v_{i}$ using the following
modification of the formula (\ref{rt2}):
\begin{equation}\label{rt2a}
\tilde L=L|_{D_x=(u_N)^{-1/N}D_z}= \sum\limits_{i=-\infty}^N u_i ((u_N)^{-1/N} D_z)^i
 \equiv D_z^N+\sum\limits_{i=-\infty}^{N-1}v_i D_z^i.
\end{equation}

\section{Dispersionless case}
Analogs of Theorems~\ref{t1} and \ref{rtt} hold in the
dispersionless case when $D_x$ is replaced by a formal parameter $p$
and the commutator $[,]$ is replaced by the Poisson bracket
\[
\{f ,g\}=\partial f/\partial p D_x(g)-\partial g/\partial p D_x(f),
\]
see e.g.\ the discussion in \cite{sb} and references therein for
further details.

The differential operator $L$ (\ref{lk1}) is now replaced by a
function
\begin{equation}\label{lk1p}
\mathcal{L}= \sum\limits_{i=0}^N u_i p^i + p^{-1} u_{-1},
\end{equation}
and instead of the spectral problem (\ref{lpsi}) we have
\begin{equation}\label{lpsip}
\mathcal{L}|_{p=\psi_x}=\lambda,
\end{equation}
while (\ref{hier}) is replaced by
\begin{equation}\label{hierp}
\psi_{t_q}=(P_{\geq 1}(\mathcal{L}^q)|_{p=\psi_x}+ \epsilon  \lbrack
\mathcal{L}^q\rbrack_{0}|_{p=\psi_x} \psi_x,\qquad q=1,2,\dots,
\end{equation}
where the projection $P_{\geq s}$ is now now defined as follows:
\[
P_{\geq s}\left(\sum\limits_{j=-\infty}^N a_i
p^i\right)=\sum\limits_{j=s}^N a_i p^i.
\]

Finally, the associated hierarchy, the counterpart of (\ref{hiera}),
has the form
\begin{equation}\label{hierap}
\mathcal{L}_{t_q}=\{ P_{\geq 1}(\mathcal{L}^q)+ \epsilon
(\mathcal{L}^q)_{p=0} D_x,\mathcal{L}\},\qquad q=1,2,\dots.
\end{equation}

We have the following results:
\begin{theo}\label{t1p}
The transformation $(x,\{t_i\},\{u_j\}) \rightarrow (z,\{\tau_i\},
\{v_j\})$ sends the system
\begin{equation}\label{hier0p} \mathcal{L}|_{p=\psi_x}=\lambda,\qquad
\psi_{t_i}=(P_{\geq 1}(\mathcal{L}^i)+ \epsilon  \lbrack
\mathcal{L}^i\rbrack_{0}p)_{p=\psi_x},\quad i=0,1,2,\dots
\end{equation}
into
\begin{equation}\label{hier1p}
\tilde{\mathcal{L}}|_{\tilde p=\psi_z}=\lambda,\qquad
\psi_{\tau_i}=(P'_{\geq 2}(\tilde{\mathcal{L}}^i))_{\tilde
p=\psi_z},\quad i=0,1,2,\dots,
\end{equation}
and hence sends the hierarchy
\begin{equation}\label{cc0p}
\mathcal{L}_{t_i}=\{P_{\geq 1}(\mathcal{L}^i)+ \epsilon  \lbrack
\mathcal{L}^i\rbrack_{0} p,\mathcal{L}\},\quad i=0,1,2,\dots
\end{equation}
into the dispersionless Harry-Dym-type hierarchy
\begin{equation}\label{cc1p}
\tilde{\mathcal{L}}_{\tau_i}=\{P'_{\geq 2}(\tilde
{\mathcal{L}}^i),\tilde{\mathcal{L}}\},\quad i=0,1,2,\dots.
\end{equation}
\end{theo}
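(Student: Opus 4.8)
The plan is to repeat, with the obvious changes, the proof of Theorem~\ref{t1}: replace $D_x$ by the momentum $p$, $D_z$ by the new momentum $\tilde p$ (which on solutions of the transformed system equals $\psi_z$), composition of operators by multiplication of symbols, and the operator identity of \cite{or} by its dispersionless analogue. The transformation is as in Theorem~\ref{t1}: $z=\chi$, where $\chi$ is a solution with $\chi_x\neq0$ of the dispersionless counterpart of the $\chi$-system (whose compatibility is assumed, as before), $\tau_i=t_i$, and the $v_j$ are read off from $\tilde{\mathcal{L}}=\mathcal{L}|_{p=\chi_x\tilde p}$.

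First I would pass in (\ref{hier0p}) from $(x,\{t_i\})$ to $(z,\{\tau_i\})$. From $z=\chi$ we get $\psi_x=\chi_x\psi_z$, so the eikonal equation $\mathcal{L}|_{p=\psi_x}=\lambda$ turns into $\tilde{\mathcal{L}}|_{\tilde p=\psi_z}=\lambda$, the first equation of (\ref{hier1p}). For the flows, the chain rule gives $\psi_{t_i}=\psi_{\tau_i}+\psi_z\chi_{t_i}$; inserting the $\psi$-flow from (\ref{hier0p}) and the $\chi$-equation, the two $\epsilon$-contributions cancel (exactly as in the operator case, since $\epsilon\lbrack\mathcal{L}^i\rbrack_{0}\psi_x=\psi_z\cdot\epsilon\lbrack\mathcal{L}^i\rbrack_{0}\chi_x$), and one is left with $P_{\geq 1}(\mathcal{L}^i)|_{p=\psi_x}-\psi_z\,P_{\geq 1}(\mathcal{L}^i)|_{p=\chi_x}$. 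This equals $P'_{\geq 2}(\tilde{\mathcal{L}}^i)|_{\tilde p=\psi_z}$ by the elementary identity
\[
P'_{\geq 2}(\tilde{\mathcal{A}})=P_{\geq 1}(\mathcal{A})|_{p=\chi_x\tilde p}-\lbrack\tilde{\mathcal{A}}\rbrack_{1}\,\tilde p,\qquad\tilde{\mathcal{A}}=\mathcal{A}|_{p=\chi_x\tilde p},
\]
the dispersionless form of Lemma~3,~i) of \cite{or}, so that the second equation of (\ref{hier1p}) holds as well. Finally, since (\ref{cc0p}) and (\ref{cc1p}) are the compatibility conditions of (\ref{hier0p}) and (\ref{hier1p}) respectively, the transformation carries (\ref{cc0p}) into (\ref{cc1p}).

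The only point that needs care is the dispersionless reduction of the $\chi$-equation and of the identity from \cite{or}. In Theorem~\ref{t1} one has $\chi_{t_i}=(P_{\geq 1}(L^i)+\epsilon\lbrack L^i\rbrack_{0}D_x)\chi=\sum_{j\geq 1}\lbrack L^i\rbrack_{j}\,\partial_x^j\chi+\epsilon\lbrack L^i\rbrack_{0}\chi_x$, and the term $P_{\geq 1}(L^i)\chi$ subtracted in the \cite{or} identity is the same sum $\sum_{j\geq 1}\lbrack L^i\rbrack_{j}\,\partial_x^j\chi$. In the dispersionless limit $\chi_x$ is a slowly varying quantity, so the derivatives $\partial_x^j\chi$ with $j\geq 2$ are negligible and only the $j=1$ term survives; hence $P_{\geq 1}(L^i)\chi$ collapses to $\lbrack\mathcal{L}^i\rbrack_{1}\chi_x=\lbrack\tilde{\mathcal{L}}^i\rbrack_{1}$ (the coefficient of $\tilde p$ in $\tilde{\mathcal{L}}^i$), and the $\chi$-equation becomes $\chi_{t_i}=(\lbrack\mathcal{L}^i\rbrack_{1}+\epsilon\lbrack\mathcal{L}^i\rbrack_{0})\chi_x$. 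Keeping exactly these reductions is what makes the $\epsilon$-terms cancel and the remainder come out as $P'_{\geq 2}(\tilde{\mathcal{L}}^i)$ rather than $P'_{\geq 1}(\tilde{\mathcal{L}}^i)$; the negative powers of $p$ coming from $p^{-1}u_{-1}$ play no role here, as they lie outside the range of both $P_{\geq 1}$ and $P'_{\geq 2}$ and transform covariantly under $p=\chi_x\tilde p$.
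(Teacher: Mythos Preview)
Your overall strategy---carry the proof of Theorem~\ref{t1} over to symbols, replacing the Oevel--Rogers operator identity by its dispersionless form---is precisely what the paper intends; no separate proof of Theorem~\ref{t1p} is given there, only the remark that it is the dispersionless analogue of Theorem~\ref{t1}.

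There is, however, an inconsistency between your second and third paragraphs, and as written the computation in the second paragraph does not close. You let $\chi$ satisfy the \emph{same} nonlinear Hamilton--Jacobi flow as $\psi$, so that after the $\epsilon$-cancellation the remainder is
\[
P_{\geq 1}(\mathcal{L}^i)\big|_{p=\psi_x}\;-\;\psi_z\,P_{\geq 1}(\mathcal{L}^i)\big|_{p=\chi_x}.
\]
This is \emph{not} $P'_{\geq 2}(\tilde{\mathcal{L}}^i)\big|_{\tilde p=\psi_z}$: your own displayed identity subtracts $[\tilde{\mathcal{L}}^i]_1\,\tilde p=[\mathcal{L}^i]_1\chi_x\,\tilde p$, whereas $P_{\geq 1}(\mathcal{L}^i)\big|_{p=\chi_x}=\sum_{j\geq1}[\mathcal{L}^i]_j\,\chi_x^{\,j}$ contains all the higher powers of $\chi_x$. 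The argument only goes through if $\chi$ obeys the \emph{linear} transport equation
\[
\chi_{t_i}=\bigl([\mathcal{L}^i]_1+\epsilon[\mathcal{L}^i]_0\bigr)\chi_x
\]
that you write down in your third paragraph---which is indeed the genuine dispersionless limit of $(P_{\geq1}(L^i)+\epsilon[L^i]_0 D_x)\chi$ for a non-oscillatory $\chi$, and which is exactly the dispersionless counterpart of the term $(P_{\geq1}(A)\chi)$ in the Oevel--Rogers identity. So take that linear system as the \emph{definition} of the $\chi$-equations from the outset (its compatibility follows from the hierarchy); then the chain-rule step and the symbol identity match perfectly. Your third paragraph has the right content; it just needs to precede and replace, rather than retroactively repair, the computation in the second.
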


\begin{theo}\label{rttp}
Consider the {\em reciprocal transformation} from $x$ and $t_i$,
$i=1,2,\dots$ to new independent variables $z$ and $\tau_i$,
$i=1,2,\dots$, where $\tau_i=t_i$, $i=1,2,\dots$, and $z$ is defined
by the formula
\begin{equation}\label{rt1p}
dz=(u_N)^{-1/N}dx +\epsilon \sum\limits_{q=1}^{\infty} (u_N)^{-1/N}
(\mathcal{L}^q)_{p=0} dt_q,
\end{equation}
and introduce new dependent variables $v_i$ related to $u_i$  by
means of the formulas
\begin{equation}\label{rt2p}
\tilde{\mathcal{L}}=\mathcal{L}|_{p=(u_N)^{-1/N}\bar p}=
\sum\limits_{i=0}^N u_i ((u_N)^{-1/N} \bar p)^i + {\bar p}^{-1}
(u_N)^{-1/N} u_{-1} \equiv {\bar p}^N+\bar p
v_{-1}+\sum\limits_{i=0}^{N-1}v_i {\bar p}^i
\end{equation}
and
\begin{equation}\label{rt3p}
v_N=(u_N)^{1/N}.\end{equation}

Then the hierarchy 
(\ref{hierap}) goes into the hierarchy of nonlinear PDEs for $v_i$:
\begin{equation}\label{hiera1p}
\tilde{\mathcal{L}}_{\tau_q}=\{P'_{\geq 1}(\tilde{\mathcal{L}}^q),
\tilde{\mathcal{L}}\},\quad q=1,2,\dots
\end{equation}
and
\begin{equation}\label{hiera1a}
(v_N)_{\tau_q}=-\epsilon (\lbrack
\tilde{\mathcal{L}}^q\rbrack_{0})_z,\quad q=1,2,\dots
\end{equation}
\end{theo}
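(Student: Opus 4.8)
The plan is to mimic the proof of Theorem~\ref{rtt} step by step, replacing differential operators by symbols and commutators by the Poisson bracket $\{\cdot,\cdot\}$ throughout. First I would establish the dispersionless analog of the conservation law \eqref{cl}: starting from \eqref{hierap} with $q$ fixed, I would extract the coefficient of $p^{N}$ on both sides of $\mathcal{L}_{t_q}=\{P_{\geq 1}(\mathcal{L}^q)+\epsilon(\mathcal{L}^q)_{p=0}D_x,\mathcal{L}\}$. Because the Poisson bracket $\{f,g\}=\partial_p f\, D_x g-\partial_p g\, D_x f$ lowers the $p$-degree of the product in a controlled way, the top coefficient yields $(u_N)_{t_q}=\epsilon D_x(u_N)(\mathcal{L}^q)_{p=0}-\epsilon N u_N D_x((\mathcal{L}^q)_{p=0})$, hence $((u_N)^{-1/N})_{t_q}=\epsilon D_x(u_N(\mathcal{L}^q)_{p=0})$. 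This is exactly the closedness condition guaranteeing that the one-form $dz$ in \eqref{rt1p} is well defined.

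Next I would perform the change of independent variables $(x,\{t_q\})\to(z,\{\tau_q\})$, which amounts to $D_x=(u_N)^{-1/N}D_z$ together with $D_{t_q}=D_{\tau_q}+\epsilon(u_N)^{-1/N}(\mathcal{L}^q)_{p=0}D_z$ on the level of total derivatives. Under the symbol substitution $p=(u_N)^{-1/N}\bar p$ the function $\mathcal{L}$ becomes $\tilde{\mathcal{L}}$ as in \eqref{rt2p}, with leading term normalized to ${\bar p}^{N}$ by virtue of \eqref{rt3p}. The spectral equation \eqref{lpsip} turns into $\tilde{\mathcal{L}}|_{\bar p=\psi_z}=\lambda$, and the $\epsilon$-term in the flow \eqref{hierp} is absorbed precisely by the extra $D_z$-piece coming from $D_{t_q}$, so that \eqref{hierp} collapses to $\psi_{\tau_q}=P_{\geq 1}(\mathcal{L}^q)|_{p=\psi_z}$. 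Invoking the dispersionless counterpart of the identity $P_{\geq 1}(\mathcal{L}^q)=P'_{\geq 1}(\tilde{\mathcal{L}}^q)$ (which follows exactly as in the operator case, cf.\ \cite{or} and Theorem~\ref{t1p} above, since conjugation by the change of symbol commutes with the projection $P_{\geq 1}$ once the variable $z$ is frozen), the compatibility conditions of $\tilde{\mathcal{L}}|_{\bar p=\psi_z}=\lambda$ and $\psi_{\tau_q}=P'_{\geq1}(\tilde{\mathcal{L}}^q)|_{\bar p=\psi_z}$ are precisely \eqref{hiera1p}.

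It remains to account for the evolution of $v_N=(u_N)^{1/N}$. Rewriting the conservation law $((u_N)^{-1/N})_{t_q}=\epsilon D_x(u_N(\mathcal{L}^q)_{p=0})$ in the new variables, the right-hand side $\epsilon D_x(u_N(\mathcal{L}^q)_{p=0})=\epsilon (u_N)^{-1/N}D_z(\ldots)$ together with the transformation of $D_{t_q}$ gives $(v_N)_{\tau_q}=-\epsilon D_z((\mathcal{L}^q)_{p=0})$; finally $P_{\geq 1}(\mathcal{L}^q)=P'_{\geq1}(\tilde{\mathcal{L}}^q)$ forces $(\mathcal{L}^q)_{p=0}=(\tilde{\mathcal{L}}^q)_{p=0}=[\tilde{\mathcal{L}}^q]_0$, yielding \eqref{hiera1a}. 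Thus \eqref{hierap} splits into \eqref{hiera1p} plus \eqref{hiera1a}, as claimed.

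I expect the only genuinely delicate point to be the bookkeeping of the $\epsilon$-term under the reciprocal transformation: one must check that the non-Jacobian change of time variables, $\tau_q=t_q$ but with $z$ itself depending on all the $t_q$, produces in $D_{t_q}$ exactly the term $\epsilon(u_N)^{-1/N}(\mathcal{L}^q)_{p=0}D_z$ that cancels the deformation term in \eqref{hierp}. This is where the closedness identity derived in the first step is used crucially, and it is the same mechanism as in the proof of Theorem~\ref{rtt}; everything else is a routine transcription from the commutator setting to the Poisson-bracket setting, using that $P_{\geq s}$ is defined by the identical formula in terms of powers of $p$.
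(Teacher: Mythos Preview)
Your proposal is correct and follows essentially the same approach as the paper: the paper does not give a separate proof of Theorem~\ref{rttp} at all, merely presenting it as the dispersionless analog of Theorem~\ref{rtt}, so your step-by-step transcription of that proof (conservation law for $u_N^{-1/N}$, reciprocal change of variables absorbing the $\epsilon$-term, the identity $P_{\geq 1}(\mathcal{L}^q)=P'_{\geq 1}(\tilde{\mathcal{L}}^q)$, and the separate derivation of the $v_N$-flow) is exactly the intended argument. The only caveat is that, as in the paper's own formula~\eqref{cl}, the right-hand side of the closedness identity should read $\epsilon D_x\bigl((u_N)^{-1/N}(\mathcal{L}^q)_{p=0}\bigr)$ rather than $\epsilon D_x\bigl(u_N(\mathcal{L}^q)_{p=0}\bigr)$, since it is the mixed-partials condition for $dz$ in~\eqref{rt1p}; this is a typo you inherited and does not affect the logic.
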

Note that equations (\ref{hiera1p}) are the compatibility conditions
for the system
\[
\tilde{\mathcal{L}}|_{\tilde p=\psi_z}=\lambda
\]
with
\begin{equation}\label{hier1p}
\psi_{\tau_q}=\left(P_{\geq 1}(\tilde{\mathcal{L}}^q)\right)_{\tilde
p=\psi_z},\qquad q=1,2,\dots.
\end{equation}

\end{document}